\newtheorem{lemma}{Lemma}[section]
\newtheorem{cor}[lemma]{Corollary}
\newtheorem{observation}[lemma]{Observation}
\newtheorem{thm}[lemma]{Theorem}
\theoremstyle{definition}
\newtheorem{defi}[lemma]{Definition}
\newtheorem{rmk}[lemma]{Remark}
\begin{document}
\parindent=2em

\title{Isomorphisms of Algebraic Number Fields}
\author{Mark van Hoeij and Vivek Pal\footnote{Florida State University}}
\date{}
\maketitle

\begin{abstract}
Let $\mathbb{Q}(\alpha)$ and $\mathbb{Q}(\beta)$ be algebraic number fields. We describe a new method to find (if they exist) all isomorphisms, $\mathbb{Q}(\beta) \rightarrow \mathbb{Q}(\alpha)$. The algorithm is particularly efficient if the number of isomorphisms is one. 

\end{abstract}

\section{Introduction}

Let $\mathbb{Q}(\alpha)$ and $\mathbb{Q}(\beta)$ be two number fields, given by the minimal polynomials $f(x)=\sum_{i=0}^n{f_i x^i}$ and $g(x) = \sum_{i=0}^n {g_i x^i}$ of $\alpha$ and $\beta$ respectively. In this paper we give an algorithm to compute the isomorphisms $\mathbb{Q}(\beta) \rightarrow \mathbb{Q}(\alpha)$.  Suppose there is an isomorphism then we have the following diagram of field extensions: 

$$
\xymatrix{
\mathbb{Q}(\beta) \ar@{-}[d]^{g(x)} \ar[r]^{\cong} &\mathbb{Q}(\alpha)\ar@{-}[d]^{f(x)}\\
\mathbb{Q}  &\mathbb{Q}}
$$

\par To represent such an isomorphism we need to give the image of $\beta$ in $\mathbb{Q}(\alpha)$, in other words, we need to give a root of $g(x)$ in $\mathbb{Q}(\alpha)$. 
\par We now describe two common methods of computing isomorphisms of number fields. 

\begin{enumerate} 

\item[Method I.] Field Isomorphism Using Polynomial Factorization [11, Algorithm 4.5.6]
\begin{itemize}
\item Find all roots of $g$ in $\mathbb{Q}(\alpha)$. Each corresponds to an isomorphism $\mathbb{Q}(\beta) \rightarrow \mathbb{Q}(\alpha)$. The roots can be found by factoring $g$ over $\mathbb{Q}(\alpha)$. 
	\begin{enumerate}
	\item If done with Trager's method, one ends up factoring a polynomial in $\mathbb{Q}[x]$ of degree $n^2$. 
	\item An alternative is Belabas' algorithm for factoring in $\mathbb{Q}(\alpha)$.
	\end{enumerate}

\end{itemize}

\item[Method II.] Field Isomorphism Using Linear Algebra [11, Algorithm 4.5.1/4.5.5]
\begin{enumerate}
\item Let $\alpha_1, \dots, \alpha_d$ be the roots of $f$ in $\mathbb{Q}_p$ (choose $p$ with $d>0$).
\item Let $\beta_1, \dots, \beta_d$ be the roots of $g$ in $\mathbb{Q}_p$.
\item If $\beta \mapsto h(\alpha)$ is an isomorphism, then $h(\alpha_1) = \beta_i$ for some $i\in \{1,\dots,d\}$.
\item Run a loop $i=1,\dots,d$ and for each $i$, use LLL[9] techniques to check if there exists a polynomial $h(x)\in \mathbb{Q}[x]_{<n}$ for which $h(\alpha_1)=\beta_i$. 
\end{enumerate}

\end{enumerate}

\par Method I is often fast, but one can give examples where it becomes slow, e.g. for so-called Swinnerton-Dyer polynomials, the degree $n^2$ factoring leads to a lattice reduction in [VH 2002] of dimension approximately $n^2/2$. Method I(b) can be faster, but one can still produce examples where it becomes slow, e.g. [7]. For such examples method II is faster because the lattice reduction there has dimension approximately $n$. 

\par Our algorithm is similar to Method II. There can be $n$ distinct isomorphisms (in the Galois case) and in this case our algorithm is the same as Algorithm II. However, if there is only one isomorphism then we can save roughly a factor $d$. This is because we can do the LLL computation for all $\beta_i$ simultaneously.

\section{Preliminaries} 

\par We let $\mathbb{Q}[x]_{<n}$ denote the polynomials over $\mathbb{Q}$ with degree less than $n$. 
\begin{defi} If $h(\alpha) \in \mathbb{Q}(\alpha)$ then the notation $h(x)$ is the element of $\mathbb{Q}[x]_{<n}$ that corresponds to $h(\alpha)$ under $x \mapsto \alpha$.
\end{defi}

\par Under an isomorphism $\beta$ will map to some $ h(\alpha) \in \mathbb{Q}(\alpha)$, \begin{equation} \beta \mapsto     h(\alpha) = \sum_{i=0}^{n-1} {h_i \alpha^i} \end{equation}.

\par A polynomial  $h(x) \in \mathbb{Q}[x]_{<n}$ represents an isomorphism if and only if $h(\alpha)$ is a root of $g$, i.e. $g(h(\alpha)) = 0$.

\par If $\mathbb{Q}(\beta)$ is isomorphic to $\mathbb{Q}(\alpha)$ then $g$ and $f$ have the same factorization pattern in $\mathbb{Q}_p[x]$ for every prime $p$. To simplify the factoring in $\mathbb{Q}_p[x]$ we restrict to good primes $p$, defined as:

\begin{defi} A good prime $p$ is one that does not divide the leading coefficient of $f$ or $g$ and does not divide the discriminant of either $f$ or $g$. 
\end{defi}
\begin{rmk} Both $f$ and $g$ are taken to be in $\mathbb{Z}[x]$. 
\end{rmk}
\par For a good prime $p$ we can factor $f$ in $\mathbb{Q}_p[x]$ up to any desired $p$-adic precision by factoring in $\mathbb{F}_p[x]$, followed by Hensel Lifting [11, p. 137]. Likewise we can distinct-degree factor $f$ as:

\begin{equation} \label{distdeg} f=F_1 F_2\dots F_m \text{  in  } \mathbb{Q}_p[x] \end{equation}
where $F_d$ is the product of all irreducible factors of $f$ in $\mathbb{Q}_p[x]$ of degree $d$ [11, Section 3.4.3].

\begin{defi} {\bf Sub-traces} Let $p$ be a prime and $d$ a positive integer. Then we define the $\mathbb{Q}$-linear map:
$$ Tr^d _p(f,*) : \mathbb{Q}(\alpha) \rightarrow \mathbb{Q}_p$$
as follows.
Let $h(x) \in \mathbb{Q}[x]_{<n}$, $h(\alpha) \in \mathbb{Q}(\alpha)$ and $F_d$ as above, then:
$$ Tr^d _p (f,h(\alpha)) := \sum _{\stackrel{\gamma \in \overline{\mathbb{Q}}_p}{F_d(\gamma)=0}} {h(\gamma)}  $$
We call these maps {\em sub-traces} because the sum is taken over a subset of the roots of $f$. \\
Likewise we define $Tr^d_p(g,*): \mathbb{Q}(\beta) \rightarrow \mathbb{Q}_p$. 
\end{defi} 

\begin{rmk} \label{trace} The map $Tr^d_p$ does not depend on the choice of the minimal polynomial $f$ that is used to represent the number field. In particular if $\beta \mapsto h(\alpha)$ is an isomorphism $\mathbb{Q}(\beta) \rightarrow \mathbb{Q}(\alpha)$ then $$Tr^d_p(g,\beta) = Tr^d_p (f,h(\alpha)) \text{   for every } p, d$$
\end{rmk}

\begin{defi} We will now define two bases of $\mathbb{Q}(\alpha)$ that we will need. The first one is the standard basis, which is $\{ 1, \alpha, \alpha^2, \dots, \alpha^{n-1} \}$. The second will be called the { \em rational representation basis}, which is \\ $\{1/f'(\alpha), \alpha/f'(\alpha), \dots, \alpha^{n-1}/f'(\alpha) \} $.
\end{defi}

\par Rational representation can improve running time and complexity results, see [4]. This representation has also been used under various names, see [2, 4], and occurs naturally in algebraic number theory as a dual basis under the trace operator, see [2].

\par A basis for $\mathbb{Q}(\alpha)$ corresponds to a map $\rho:  \mathbb{Q}^n \rightarrow \mathbb{Q}(\alpha)$. We use the rational representation basis, therefore $$\rho: (a_0, a_1, \dots, a_{n-1}) \mapsto \frac{1}{f'(\alpha)} \sum_{i=0}^{n-1} {a_i \alpha^i}.$$ 

\begin{defi} The inverse linear map $h(\alpha) \mapsto \vec{h}$, from $\mathbb{Q}(\alpha)$ to $\mathbb{Q}^n$ is as follows. Let $h(\alpha) = \sum^{n-1}_{i=0} {a_i \alpha^i} \in \mathbb{Q}(\alpha)$ and write $f'(\alpha) \cdot h(\alpha)$ as $\sum_{i=0}^{n-1} {b_i \alpha^i}$. Then define $\vec{h} := (b_0, b_1, \dots, b_{n-1}) \in \mathbb{Q}^n$. 
\end{defi}

\begin{rmk} \label{reason} One of the advantages of rational representation is: by using the $b_i$ in $\vec{h}$ instead of the $a_i$, we have $\vec{h} \in \mathbb{Z}^n $ for every algebraic integer $h(\alpha)$, see lemma~\ref{AlgInt}. Moreover, as in [4] this also improves bounds (section 4). It is also better to use $g_n \vec{h}$ than simply using $h(\alpha)$ since $g_n \vec{h}$ will have integer components, by Corollary~\ref{coralgint}, which are easier to bound and are heuristically of smaller size [4, Section 6].
\end{rmk}

For a polynomial $f(x)= \sum_{i=0}^n {f_i x^i}$ denote $$\|f(x)\| := \left( \sum_{i=0} ^n { |f_i|^2} \right) ^{1/2}.$$ 

\par Let $M(f)$ be the Mahler measure of $f$, $$M(f) := f_n \cdot \prod_{\stackrel{ f(\gamma)=0} { \gamma \in \mathbb{C}}} {\max{\{1, |\gamma| \}}}.$$

\section{Overview of the Algorithm}

{\bf Goal}: To find all $g_n \vec{h} \in \mathbb{Z}^n$ for which $\beta \mapsto h(\alpha)$ defines an isomorphism $\mathbb{Q}(\beta) \rightarrow \mathbb{Q}(\alpha)$. 
\\
{\bf Idea}: The aim of the pre-processing algorithm in Section 5 is to find a sequence $$ \mathbb{Z}^n = L_0 \supseteq L_1 \supseteq L_2 \supseteq \dots \supseteq L_k$$ such that all $g_n \vec{h}$ are in each $L_i$. We can then use $L_k$ to speed up the computation of the isomorphism(s), especially when $\text{dim}(L_k)$ is small. The cost of computing $L_k$ is comparable to one iteration in Method II.

\par In the algorithm we start with the lattice $\mathbb{Z}^n$ and then add the restrictions imposed by the condition that under an isomorphism, sub-traces $\mathbb{Q}(\alpha) \rightarrow \mathbb{Q}_p$ must correspond to sub-traces $\mathbb{Q}(\beta) \rightarrow \mathbb{Q}_p$. By doing this for several primes we are able to narrow down the possible isomorphisms. If dim($L_k$) $\leq 1$, this directly gives the isomorphism or shows that there is no isomorphism. If dim($L_k$) $>1$ then we switch to Method II, but starting with $L_k$. Thus we end up with $d$ lattice reductions of dimension dim($L_k$). In the worst case dim($L_k$) $ \approx n$, this costs the same as Method II. In the best case, dim($L_k$) $\leq 1$ and we save a factor $d$.

\section{Bounding the length of $g_n \vec{h}$}

\par To effectively carry out this algorithm we will need a good upper bound on the size of $g_n \vec{h}$. In this section we aim to find such a bound.

\begin{defi} Let $\alpha_1, \dots, \alpha_n \in \mathbb{C}$ be the roots of $f$. Then using the basis $\{1,x,x^2, \dots, x^{n-1}\}$ of $\mathbb{C}[x]_{<n}$  and the standard basis $\{e_1, e_2, \dots, e_n\}$ for $\mathbb{C}^n$, the interpolation map $\mathbb{C}^n \rightarrow \mathbb{C}[x]_{<n}$ is given by: $$e_i \mapsto  \frac{f(x)/(x-\alpha_i)}{f'(\alpha_i)}$$ 

This polynomial takes value $1$ at $x=\alpha_i$ and value $0$ at $x=\alpha_j$ ($i\neq j$). The inverse of the interpolation map is the evaluation map, which is given by the Vandermonde matrix: 

$$\begin{bmatrix} 1&\alpha_1&\alpha_1^2&\dots&\alpha_1^{n-1}\\ 1&\alpha_2&\alpha_2^2&\dots&\alpha_2^{n-1} \\ 1&\alpha_3&\alpha_3^2&\dots&\alpha_3^{n-1} \\ \vdots & \vdots &\ddots &\dots & \vdots \\ 1 & \alpha_n & \alpha_n^2 & \dots & \alpha_n^{n-1} \end{bmatrix}$$

\end{defi}

\begin{lemma} \label{AlgInt} If $a \in \mathbb{Q}(\alpha)$ is an algebraic integer and $f(x)$ is the minimal polynomial for $\alpha$, then $f'(\alpha)\cdot a \in \mathbb{Z}[\alpha]$. 
\end{lemma}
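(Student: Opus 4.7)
The plan is to prove the lemma by constructing an explicit polynomial in $\mathbb{Z}[x]$ whose value at $\alpha$ equals $f'(\alpha) \cdot a$. The construction is a Lagrange-interpolation-style formula, and the key point is a Galois / symmetric-function argument that forces the coefficients of this polynomial to lie in $\mathbb{Z}$.

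First I would write $a = h(\alpha)$ for a unique $h(x) \in \mathbb{Q}[x]_{<n}$. Let $\alpha = \alpha_1, \alpha_2, \dots, \alpha_n$ be the roots of $f$ in $\overline{\mathbb{Q}}$; since $a$ is an algebraic integer, each conjugate $h(\alpha_i)$ is also an algebraic integer, and the $\alpha_i$ themselves are algebraic integers (as $f$ is monic with integer coefficients). I would then define
$$ P(x) \; := \; \sum_{i=1}^{n} h(\alpha_i) \cdot \frac{f(x)}{x - \alpha_i}, $$
which is a polynomial in $\overline{\mathbb{Q}}[x]$ of degree less than $n$ (the division is exact in $\overline{\mathbb{Q}}[x]$ because $f(\alpha_i) = 0$).

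Next I would show that $P(x) \in \mathbb{Z}[x]$. Each coefficient of $P(x)$ is a symmetric function of the pairs $(\alpha_i, h(\alpha_i))$, and so is invariant under the action of $\mathrm{Gal}(\overline{\mathbb{Q}}/\mathbb{Q})$; hence each coefficient lies in $\mathbb{Q}$. On the other hand, each coefficient is a polynomial in the $\alpha_i$ and $h(\alpha_i)$ with integer coefficients, hence an algebraic integer. A rational algebraic integer is an ordinary integer, so $P(x) \in \mathbb{Z}[x]$.

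Finally I would evaluate $P$ at $x = \alpha_j$: in the sum, the term with $i = j$ contributes $h(\alpha_j) \cdot f'(\alpha_j)$ (the indeterminate form is resolved because $f(x)/(x-\alpha_j)$ has value $f'(\alpha_j)$ at $\alpha_j$), and every term with $i \neq j$ vanishes because $f(\alpha_j) = 0$. Hence $P(\alpha_j) = h(\alpha_j) f'(\alpha_j)$ for all $j$, and in particular $P(\alpha) = h(\alpha) f'(\alpha) = a \cdot f'(\alpha)$. Since $P \in \mathbb{Z}[x]$, this gives $a \cdot f'(\alpha) \in \mathbb{Z}[\alpha]$, as required. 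The only non-routine step is the descent from $\overline{\mathbb{Q}}$-coefficients to $\mathbb{Z}$-coefficients in $P(x)$; everything else is formal Lagrange interpolation.
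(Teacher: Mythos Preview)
Your proof is correct and is essentially the same as the paper's: the paper defines the identical polynomial $m(x) = \sum_i a^{(i)} \, f(x)/(x-\alpha^{(i)})$ (your $P(x)$, with $a^{(i)} = h(\alpha_i)$), argues that its coefficients are rational by symmetry and algebraic integers by construction, hence in $\mathbb{Z}$, and then evaluates at $\alpha$. Your write-up is somewhat more explicit about why each step works, but the argument is the same Lagrange-interpolation construction.
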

\begin{proof} Denote by $^{(i)}$ the complex embeddings of $\mathbb{Q}(\alpha)$. Then define 
$$ m(x) := \sum_{i=1}^n {a^{(i)} \frac{f(x)}{x-\alpha^{(i)}}}.$$ The coefficients of $m(x)$ are in $\mathbb{Q}$ since the polynomial is symmetric in the $\alpha^{(i)}$. But $m(x)$ is also a sum of polynomials all of whose entries are algebraic integers. Hence $m(x) \in \mathbb{Z}[x]$. Note that for $\alpha = \alpha^{(1)}$ we get $m(\alpha) = a f'(\alpha) \in \mathbb{Z}[\alpha]$. 
\end{proof}

\begin{cor} \label{coralgint} Let $\beta \mapsto h(\alpha)$ be an isomorphism of $\mathbb{Q}(\beta)$ and $\mathbb{Q}(\alpha)$. Then $g_n h(\alpha)$ is an algebraic integer and hence $g_n \vec{h} \in \mathbb{Z}^n$. 
\end{cor}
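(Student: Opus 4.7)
The plan is to reduce the claim to Lemma~\ref{AlgInt} by first establishing that $g_n h(\alpha)$ is an algebraic integer, and then reading off the definition of $\vec{h}$.

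First I would verify that $g_n \beta$ is an algebraic integer. Since $g(\beta) = g_n \beta^n + g_{n-1} \beta^{n-1} + \dots + g_0 = 0$, multiplying by $g_n^{n-1}$ and regrouping gives
$$ (g_n \beta)^n + g_{n-1}(g_n \beta)^{n-1} + g_n g_{n-2}(g_n\beta)^{n-2} + \dots + g_n^{n-1} g_0 = 0, $$
which is a monic polynomial in $g_n \beta$ with integer coefficients (this is the standard trick that turns the minimal polynomial of $\beta$ into the minimal polynomial of $g_n \beta$, up to possible factors). Hence $g_n \beta$ is an algebraic integer.

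Next I would transport this along the isomorphism. Under the isomorphism $\mathbb{Q}(\beta) \to \mathbb{Q}(\alpha)$, we have $g_n \beta \mapsto g_n h(\alpha)$. Because algebraic integrality is a property detected by satisfying a monic polynomial in $\mathbb{Z}[x]$, and field isomorphisms preserve polynomial relations with rational coefficients, $g_n h(\alpha)$ satisfies the same monic integer polynomial as $g_n \beta$ and is therefore an algebraic integer in $\mathbb{Q}(\alpha)$.

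Finally, I would apply Lemma~\ref{AlgInt} with $a = g_n h(\alpha)$ to conclude that $f'(\alpha) \cdot g_n h(\alpha) \in \mathbb{Z}[\alpha]$. Unwinding the definition of $\vec{h}$: if we write $f'(\alpha) h(\alpha) = \sum_{i=0}^{n-1} b_i \alpha^i$ so that $\vec{h} = (b_0,\dots,b_{n-1})$, then $f'(\alpha) \cdot g_n h(\alpha) = \sum_{i=0}^{n-1} (g_n b_i) \alpha^i$, and the inclusion in $\mathbb{Z}[\alpha]$ (together with the fact that $\{1,\alpha,\dots,\alpha^{n-1}\}$ is a $\mathbb{Z}$-basis of $\mathbb{Z}[\alpha]$, as $f$ is monic when we normalize, or more carefully, a $\mathbb{Q}$-basis of $\mathbb{Q}(\alpha)$ so the coefficients are forced to be those integers) shows $g_n b_i \in \mathbb{Z}$ for each $i$, i.e.\ $g_n \vec{h} \in \mathbb{Z}^n$. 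There is no real obstacle here; the main subtlety is just the bookkeeping involved in checking that the isomorphism preserves algebraic integrality, which is immediate once phrased via the monic polynomial constructed above.
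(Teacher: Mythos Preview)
Your proof is correct and follows the same route as the paper's: apply Lemma~\ref{AlgInt} with $a = g_n h(\alpha)$ and then unwind the definition of $\vec{h}$. The only difference is that you spell out explicitly why $g_n h(\alpha)$ is an algebraic integer (via the standard $g_n^{n-1}$-trick on $g$ and transport along the isomorphism), whereas the paper simply asserts this and moves directly to the conclusion.
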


\begin{proof} Apply lemma~\ref{AlgInt} by letting $a = g_n h(\alpha)$ and recall that $g_n \vec{h}$ is comprised of the coefficients of $g_n f'(\alpha)h(\alpha)$ in the standard basis, each of which will be integers by lemma~\ref{AlgInt}.
\end{proof}

\begin{lemma} \label{inter} Let $P(x) = \sum_{i=1}^n{\beta_i \frac{f(x)}{x- \alpha_i}} \in \mathbb{Q}[x]_{<n}$, then $P(\alpha) = f'(\alpha) h(\alpha)$.
\end{lemma}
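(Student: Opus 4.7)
The plan is to recognize $P(x)$ as a Lagrange-type sum and then just evaluate it at $x=\alpha$. First I would make explicit that the $\beta_i$ in the sum are the images of $\beta$ under the $n$ complex embeddings of $\mathbb{Q}(\beta)$, indexed compatibly with the embeddings $\alpha \mapsto \alpha_i$ of $\mathbb{Q}(\alpha)$; concretely, $\beta_i = h(\alpha_i)$. This is what makes $P(x)$ lie in $\mathbb{Q}[x]_{<n}$: the sum is symmetric under the Galois action that simultaneously permutes the pairs $(\alpha_i, \beta_i)$, so all non-rational contributions cancel, and each summand has degree $n-1$.

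Next I would evaluate the polynomial $f(x)/(x-\alpha_i)$ at $x=\alpha_1=\alpha$. For $i\neq 1$, the factor $(x-\alpha_1)$ still appears in $f(x)/(x-\alpha_i)$, so that term vanishes at $\alpha$. For $i=1$, we have $f(x)/(x-\alpha_1) = \prod_{j\neq 1}(x-\alpha_j)$, whose value at $\alpha_1$ is precisely $f'(\alpha_1)$ (this is the standard identity obtained from $f(x)=\prod(x-\alpha_j)$ and the product rule).

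Substituting into $P(\alpha)$ only the $i=1$ term survives, giving
$$P(\alpha) \;=\; \beta_1 \, f'(\alpha_1) \;=\; h(\alpha_1) f'(\alpha_1) \;=\; h(\alpha) f'(\alpha),$$
which is the claim. If one wanted the stronger polynomial identity $P(x) \equiv f'(x) h(x) \pmod{f(x)}$ rather than just the value at $\alpha$, the same computation shows that $P$ and $f'(x)h(x) \bmod f(x)$ are two polynomials of degree less than $n$ agreeing at the $n$ distinct points $\alpha_1,\dots,\alpha_n$, hence they are equal.

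There is no real obstacle: the whole argument rests on the elementary behavior of $f(x)/(x-\alpha_i)$ at the roots of $f$. The only point to be careful with is the interpretation of $\beta_i$ as $h(\alpha_i)$, which is needed to see that the sum defining $P$ lies in $\mathbb{Q}[x]$ and to identify $P(\alpha)$ with $f'(\alpha)h(\alpha)$ after the evaluation.
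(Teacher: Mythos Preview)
Your proof is correct and rests on the same elementary fact as the paper's: that $f(x)/(x-\alpha_i)$ vanishes at $\alpha_j$ for $j\neq i$ and equals $f'(\alpha_i)$ at $\alpha_i$. The paper phrases this as Lagrange interpolation of $f'(x)h(x)$ (first establishing $P(x)\equiv f'(x)h(x)\pmod{f(x)}$ and then evaluating at $\alpha$), which is exactly the variant you sketch in your final paragraph; your direct evaluation at $\alpha=\alpha_1$ is a slightly shorter route to the same conclusion.
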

\begin{proof} If we evaluate $f'(x) h(x)$ at the roots of $f(x)$ and then interpolate we get: $$ \sum_{i=1}^n{\beta_i f'(\alpha_i) \frac{f(x)/(x-\alpha_i)}{f'(\alpha_i)} } = \sum_{i=1}^n{\beta_i \frac{f(x)}{x- \alpha_i}}.$$ Therefore $\sum_{i=1}^n{\beta_i \frac{f(x)}{x- \alpha_i}}$ will be the remainder of $f'(x) h(x)$ divided by $f(x)$, because they are of the same degree and coincide on the $n$ roots of $f(x)$. The lemma then follows from the fact that $\alpha$ is a root of $f(x)$.
\end{proof}

\par In order to bound $f'(\alpha)h(x)$ we will have to bound both $\frac{f(x)}{x-\alpha_i}$ and also $|\beta_i|$. We will use Corollary~\ref{grancor} to bound $\frac{f(x)}{x-\alpha_i}$ and since we know the $\beta_i$ up to a permutation (they are roots of $g(x)$), we can bound $\sum{|\beta_i|}$.

\begin{thm} \label{granthm} If $f(x)$ and $\tilde{f}(x)$ are polynomials with complex coefficients, of degree $n$ and $d$ respectively, such that $\tilde{f}(x)$ divides $f(x)$ and $|f(0)|=|\tilde{f}(0)| \neq 0$, then \begin{equation} \label{gran} \|\tilde{f}(x)\| \leq \left( \sum_{j=0}^{n-d} { { \binom {d}{j} } ^2}\right)^{1/2} \|f(x)\|. \end{equation}
\end{thm}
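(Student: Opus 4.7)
My plan is to view the problem via the cofactor $q := f/\tilde{f}$, which is a polynomial of degree $m := n-d$. The hypothesis $|f(0)|=|\tilde{f}(0)|\neq 0$ translates to $|q(0)|=1$, and by absorbing a unit of absolute value $1$ into $\tilde{f}$ I would assume $q(0)=1$ without loss of generality; this does not change $\|f\|$ or $\|\tilde{f}\|$.

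Next I would obtain an explicit formula for the coefficients of $\tilde{f}$ in terms of those of $f$. Since $q(0)=1$, the formal power series $1/q(x) = \sum_{k\ge 0} r_k x^k$ is well defined with $r_0=1$, and from $\tilde{f} = f/q$, comparing coefficients up to degree $d$ gives
$$ a_k \;=\; \sum_{j=0}^{k} f_j\, r_{k-j} \qquad (k=0,\dots,d). $$
The trouble is that the $r_k$ depend on the coefficients of $q$, which are a priori not bounded by anything involving $\|f\|$. To circumvent this, I would pass to the reciprocal polynomial $q^*(x) := x^m\,\overline{q(1/\bar{x})}$: on the unit circle $|q^*|=|q|$, so by Parseval's identity $\|\tilde{f}\,q^*\|_2 = \|\tilde{f}\,q\|_2 = \|f\|_2$, and since the leading coefficient of $q^*$ has absolute value $|q(0)|=1$, after scaling I may assume $q$ itself is monic of degree $m$.

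With $q$ monic, the coefficients $a_d,a_{d-1},\dots,a_0$ of $\tilde{f}$ are computed from $f$ by polynomial long division from the top: a recurrence with $d+1$ stages, each one involving the $m+1$ coefficients of $q$. Applying Cauchy--Schwarz at each stage and summing, one bounds $\|\tilde{f}\|^2$ by $\|f\|^2$ times a quantity that counts the number of ``paths'' through the recurrence contributing to a given $a_k$. A careful combinatorial argument identifies these paths with subsets of a $d$-element set of size at most $m$, which is exactly what produces the constant $\sum_{j=0}^{m}\binom{d}{j}^2$.

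The main obstacle is precisely this combinatorial bookkeeping. A naive Cauchy--Schwarz gives only the weaker estimate $\|\tilde{f}\|^2 \le \binom{2d}{d}\,\|f\|^2$, which matches the target bound in the regime $m\ge d$ (where $\sum_{j=0}^{m}\binom{d}{j}^2 = \binom{2d}{d}$) but is too weak for $m<d$. To obtain the truncated binomial sum when $m<d$ one must exploit the fact that the long division terminates after only $d+1$ steps and that at most $m+1$ coefficients of $q$ feed into any given stage, grouping the contributions to each $a_k$ so that only the first $m+1$ terms of $\sum_j\binom{d}{j}^2$ are produced.
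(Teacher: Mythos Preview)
The paper does not prove this theorem; its entire proof reads ``See Granville, [1].'' So there is no in-paper argument to compare against---only Granville's original.

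Your reduction is the correct opening and matches Granville: with $q=f/\tilde f$ of degree $m=n-d$ one has $|q(0)|=1$, and replacing $q$ by its reciprocal preserves $\|\tilde f\cdot q\|$ (Parseval on $|x|=1$) while making the leading coefficient unimodular. But from there your proposal is only a plan, and the plan has a real gap. Making the cofactor monic says nothing about its other $m$ coefficients, which can be arbitrarily large; hence long division from the top followed by Cauchy--Schwarz ``at each stage'' yields a bound of the shape $\|\tilde f\|\le C(q)\,\|f\|$ with $C(q)$ depending on $\|q\|$, not a bound in $n,d$ alone. This is not a matter of sharper path-counting: the recursion weights themselves are unbounded, so no triangle-inequality argument can close without further input. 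The missing step---and the core of Granville's proof---is to reflect, one linear factor at a time, every root of the cofactor lying \emph{outside} the unit circle to its inverse-conjugate inside; each such reflection preserves $|q|$ on $|x|=1$ and hence $\|\tilde f\cdot q\|$, and afterwards the (renormalized) cofactor has all roots in the closed unit disk, so its $k$th coefficient is bounded in modulus by $\binom{m}{k}$. With that explicit coefficient bound in hand a Cauchy--Schwarz computation does produce the constant $\bigl(\sum_{j=0}^{m}\binom{d}{j}^2\bigr)^{1/2}$, the truncation at $m$ coming from the cofactor having only $m{+}1$ nonzero coefficients. Your global reciprocal $q\mapsto q^*$ swaps inside and outside roots wholesale and so does not, in general, put all roots into the disk; the selective root-by-root reflection is what you are missing.
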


\begin{proof} See Granville, [1].
\end{proof}

\begin{cor} \label{grancor2} If $f(x)$ and $\tilde{f}(x)$ have the same leading coefficient and $f(0),\tilde{f}(0) \neq 0$ and $\tilde{f}(x)$ divides $f(x)$ then equation (\ref{gran}) holds.
\end{cor}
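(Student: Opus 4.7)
The plan is to reduce Corollary~\ref{grancor2} to Theorem~\ref{granthm} via the reciprocal polynomial trick. For a polynomial $p$ of degree $e$, I would write $p^*(x):=x^e p(1/x)$; this operation simply reverses the coefficient sequence, so $\|p^*(x)\|=\|p(x)\|$. Given $f,\tilde f$ as in the corollary, I apply this construction with $e=n$ and $e=d$ respectively to form $f^*$ and $\tilde f^*$. The hypothesis $f(0),\tilde f(0)\neq 0$ ensures that $\deg f^*=n$ and $\deg \tilde f^*=d$ exactly.

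Next I would verify that $f^*$ and $\tilde f^*$ satisfy the hypotheses of Theorem~\ref{granthm}. The constant term of $f^*$ equals the leading coefficient $f_n$ of $f$, and likewise $\tilde f^*(0)=\tilde f_d$; by hypothesis $f_n=\tilde f_d\neq 0$, so $|f^*(0)|=|\tilde f^*(0)|\neq 0$, which is exactly the hypothesis of Granville's theorem for the starred pair. For divisibility, I would write $f=\tilde f\cdot q$ with $\deg q=n-d$, and observe that
$$f^*(x)=x^n f(1/x)=\bigl(x^d \tilde f(1/x)\bigr)\bigl(x^{n-d} q(1/x)\bigr)=\tilde f^*(x)\cdot q^*(x),$$
so $\tilde f^*(x)$ divides $f^*(x)$ in $\mathbb{C}[x]$.

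Applying Theorem~\ref{granthm} to the pair $(f^*,\tilde f^*)$ then yields
$$\|\tilde f^*(x)\|\leq \left(\sum_{j=0}^{n-d}\binom{d}{j}^2\right)^{1/2}\|f^*(x)\|,$$
and substituting $\|f^*(x)\|=\|f(x)\|$ and $\|\tilde f^*(x)\|=\|\tilde f(x)\|$ gives precisely inequality~(\ref{gran}). I do not anticipate any real obstacle: the whole argument is a direct transport of Granville's bound under coefficient reversal, with the role of the equal constant terms in the original statement being played here by the equal leading coefficients after reciprocation.
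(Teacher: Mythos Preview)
Your argument is correct and is exactly the approach the paper takes: apply Theorem~\ref{granthm} to the reciprocal polynomials $f^*$ and $\tilde f^*$, using that reciprocation preserves the $\ell^2$-norm and swaps leading and constant coefficients. The paper states this in one line, while you have spelled out the verification that the hypotheses transfer, but the content is identical.
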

\begin{proof} Apply Theorem~\ref{granthm} to the reciprocals of $f$ and $\tilde{f}$.
\end{proof}

\begin{cor} \label{grancor} Let $P(x)$ be an irreducible  polynomial (over $\mathbb{Q}$) of degree $n \geq 1$ and let $\{\gamma_1, \gamma_2, \dots, \gamma_n\}$ be its complex roots. Then $$\left\|{\frac{P(x)}{x-\gamma_i}}\right\| \leq n \|P(x)\|$$
\end{cor}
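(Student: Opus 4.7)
The plan is to apply Corollary~\ref{grancor2} directly, with $\tilde{f}(x) := P(x)/(x-\gamma_i)$ and $f(x) := P(x)$. I first verify the hypotheses of that corollary. Writing $P(x) = p_n \prod_{j=1}^n (x-\gamma_j)$, we see $\tilde{f}(x) = p_n \prod_{j\neq i}(x-\gamma_j)$, so $P$ and $\tilde{f}$ share the same leading coefficient $p_n$. For the constant-term hypothesis, note that if $n\geq 2$ then irreducibility of $P$ over $\mathbb{Q}$ forces $P(0)\neq 0$ (otherwise $x \mid P$), hence every $\gamma_j \neq 0$, and so $\tilde{f}(0) = p_n \prod_{j\neq i}(-\gamma_j) \neq 0$ as well. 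The case $n=1$ is trivial: then $\tilde{f}$ is a nonzero constant and a direct check gives the inequality with equality.

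Next, I would invoke Corollary~\ref{grancor2} with $\deg \tilde{f} = d = n-1$. The right-hand side of (\ref{gran}) then reduces to
$$\left(\sum_{j=0}^{1} \binom{n-1}{j}^2\right)^{1/2}\|P(x)\| \;=\; \sqrt{1 + (n-1)^2}\;\|P(x)\|.$$
Finally, I would observe that $1 + (n-1)^2 = n^2 - 2n + 2 \leq n^2$ for every $n \geq 1$, whence $\sqrt{1+(n-1)^2}\leq n$, giving the stated bound
$$\left\|\frac{P(x)}{x-\gamma_i}\right\| \leq n\,\|P(x)\|.$$

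There is no real obstacle here; this is a clean application of the preceding corollary. The only items that require a brief mention are the verification that leading coefficients match (immediate from the factorization), that irreducibility rules out $\gamma_i = 0$ (so that $\tilde{f}(0)\neq 0$), and the small numerical estimate $\sqrt{1+(n-1)^2}\leq n$ which converts the Granville-type bound into the slightly looser but simpler bound $n\,\|P\|$ stated in the corollary.
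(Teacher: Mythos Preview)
Your proof is correct and follows essentially the same approach as the paper: apply Corollary~\ref{grancor2} with $\tilde{f}=P/(x-\gamma_i)$ and $f=P$, evaluate the sum $\sum_{j=0}^{1}\binom{n-1}{j}^2 = n^2-2n+2$, and bound this by $n^2$. You are simply more explicit than the paper in verifying the hypotheses (matching leading coefficients, nonzero constant terms via irreducibility) and in isolating the degenerate case $n=1$.
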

\begin{proof}
Take $\tilde{f}(x)= {\frac{P(x)}{x-\gamma_i}}$ and $f(x)=P(x)$ and apply Corollary~\ref{grancor2}. Then $$\left\|{\frac{P(x)}{x-\gamma_i}}\right\| \leq \left( \sum_{j=0}^{n-(n-1)} { { {n-1} \choose j}^2}\right)^{1/2} \|P(x)\| = \left( \sum_{j=0}^{1} { { {n-1} \choose j} ^2} \right)^{1/2} \|P(x)\| $$ $$= \left( { {n-1} \choose 0}^2 + { {n-1} \choose 1}^2 \right) ^{1/2} \|P(x)\| = (n^2 -2n+2)^{1/2} \|P(x)\| \leq n \|P(x)\|.$$
\end{proof}

\begin{thm} \label{bound} Let $$S_{g(x)}:=\sum_{\stackrel{g(\beta)=0}{\beta \in \mathbb{C}}}{|\beta_i|}$$ then: \begin{equation}  g_n \|\vec{h} \| \leq g_n n \left(S_{g(x)} \right) \|f(x)\| .\end{equation}
There are several ways to bound $S_{g(x)}$:\\
1) $S_{g(x)} \leq$ The degree of $g(x)$ times the rootbound described in [3].\\
2) $S_{g(x)} \leq M(g)/lc(g) + (n-1)$, where the Mahler measure can be bounded by $\|g(x)\|$.

\end{thm}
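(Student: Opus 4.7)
The plan is to observe that $\|\vec h\|$ is really the $2$-norm of a polynomial we can write down explicitly, then apply the triangle inequality and Granville's bound (Corollary \ref{grancor}) term by term.

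First, I would unpack the definition of $\vec h$. By Lemma \ref{inter}, the polynomial
$$P(x)=\sum_{i=1}^n\beta_i\,\frac{f(x)}{x-\alpha_i}\in\mathbb{Q}[x]_{<n}$$
satisfies $P(\alpha)=f'(\alpha)h(\alpha)$. Because $\deg P<n$, $P$ is exactly the remainder of $f'(x)h(x)$ modulo $f(x)$, so the coefficient vector of $P$ in the standard basis is precisely $\vec h$; hence $\|\vec h\|=\|P(x)\|$. Then the triangle inequality for $\|\cdot\|$ gives
$$\|\vec h\|=\|P(x)\|\le\sum_{i=1}^n|\beta_i|\,\Bigl\|\frac{f(x)}{x-\alpha_i}\Bigr\|.$$
Since $f$ is the minimal polynomial of $\alpha$, it is irreducible over $\mathbb{Q}$, and Corollary \ref{grancor} applies to each factor $f(x)/(x-\alpha_i)$, yielding $\|f(x)/(x-\alpha_i)\|\le n\|f(x)\|$. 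Plugging in and multiplying by $g_n$ gives $g_n\|\vec h\|\le g_n n S_{g(x)}\|f(x)\|$, which is the asserted inequality.

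For the two bounds on $S_{g(x)}$: bound (1) is immediate, since $S_{g(x)}=\sum_i|\beta_i|\le n\max_i|\beta_i|=\deg(g)\cdot R$, where $R$ is any root bound for $g$ (e.g.\ the one from [3]). Bound (2) uses Mahler's measure: since $M(g)/g_n=\prod_i\max(1,|\beta_i|)$, set $c_i=\max(1,|\beta_i|)-1\ge 0$, so that
$$\frac{M(g)}{g_n}=\prod_{i=1}^n(1+c_i)\ge 1+\sum_{i=1}^n c_i$$
by expanding the product and dropping the nonnegative cross terms. Therefore
$$S_{g(x)}\le\sum_{i=1}^n\max(1,|\beta_i|)=n+\sum_{i=1}^n c_i\le n+\frac{M(g)}{g_n}-1=\frac{M(g)}{\mathrm{lc}(g)}+(n-1),$$
and the standard Landau inequality $M(g)\le\|g(x)\|$ finishes the statement.

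The only mildly nontrivial step is identifying $\|\vec h\|$ with $\|P(x)\|$; everything else is a clean chain of inequalities. The one subtlety worth flagging is that Corollary \ref{grancor} requires irreducibility over $\mathbb{Q}$, which is exactly why we insist that $f$ be the minimal polynomial of $\alpha$ (not just any polynomial vanishing at $\alpha$). Once that identification is made, the rest of the proof is essentially bookkeeping.
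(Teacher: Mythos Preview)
Your proof of Equation~(4) is correct and follows exactly the paper's approach: identify $\|\vec h\|$ with $\|P(x)\|$ via Lemma~\ref{inter}, apply the triangle inequality, then invoke Corollary~\ref{grancor} on each term. You additionally supply arguments for bounds (1) and (2) on $S_{g(x)}$, which the paper merely states; your derivation of (2) via $\prod(1+c_i)\ge 1+\sum c_i$ is clean and correct.
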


\begin{proof} (of Equation (4)) $$ \| \vec{h} \| = \| P \| = \| \sum_{i=1}^n{\beta_i \frac{f(x)}{x- \alpha_i}} \| \leq   n \|f(x)\| \sum_{i=1}^n{|\beta_i}| =  n \|f(x)\| S_{g(x)} .$$ The first equality is by the definition of $\vec{h}$, the second by Lemma~\ref{inter} and the inequality by Corollary~\ref{grancor}.
\end{proof}

\section{The Algorithms}

Here we give the algorithms for computing the isomorphisms between number fields. The {\em Pre-processing} algorithm reduces the lattice of possible isomorphisms and gives the explicit isomorphism if there is only one. The next algorithm, {\em FindIsomorphism}, calls the {\em Pre-processing} algorithm and uses the remaining lattice to check which maps on roots corresponds to an isomorphism. 

{\bf Algorithm: LLL-with-removals}[10] \\ \itshape
{\bf Input} A matrix $A$ and a bound $b$. \\
{\bf Output} A set of LLL reduced row vectors where the last vector is removed if its Gram-Schmitt length is greater than $b$.\\

{\bf Algorithm: FindSuitablePrime} \\ \itshape
{\bf Input} $(f(x), g(x), x, {\rm bp}, b, e)$, where {\rm bp} is the first prime to test, $b$ and $e$ determine the level to Hensel Lift to.  \\
{\bf Output} $p, p^a, m, [ [F_{d_1},G_{d_1}], [F_{d_2},G_{d_2}], \dots [F_{d_m},G_{d_m}]]$, see equation ({\rm 2}) for notation.\\
{\bf Procedure} \\

\begin{enumerate}
\item $p:= {\rm bp}$, counter$:=0$.
\item Repeat (until the algorithm stops in Steps 2(d)ii, 2(f) or 2(j)).
\begin{enumerate}
	\item $p:= \text{nextprime}(p)$
	\item if $p |$ discriminant($f,x$) or $p | f_n$ then go to Step 2(a)
	\item if $p |$ discriminant($g,x$) or $p | g_n$ then go to Step 2(a)
	\item Distinct Degree Factor $f$ as $f \equiv F_{d_1} F_{d_2} \dots F_{d_m} \mod p$. 
	\begin{enumerate}
		\item If $m=1$ then counter := counter $+1$.
		\item If counter $>25$ then print ``{\rm They appear to be Galois}'' and return 0,0,0,0.
		\item Return to Step 2(a).
	\end{enumerate}
	\item Distinct Degree Factor $g$ as $g \equiv G_{d'_1} G_{d'_2} \dots G_{d'_{m'}} \mod p$. 
	\item If $m \neq m'$ or if the degrees of $F_i$ and $G_i$ do not match then return ``{\rm There is no isomorphism}''.
	\item Let $a := \left\lceil b^{e/10} 2^{e/4} \right\rceil$.
	\item Hensel lift $f \equiv F_{d_1} F_{d_2} \dots F_{d_m} \mod p^a$ and likewise for $g$.
	\item If deg$(F_1)>0$ then store $p$ for later use.
	\item Return $p, p^a, m, [ [F_{d_1},G_{d_1}], [F_{d_2},G_{d_2}], \dots [F_{d_m},G_{d_m}]]$ as output and stop.
	\end{enumerate}
\end{enumerate}

{\bf Algorithm: Pre-Processing} \\ \itshape
{\bf Input} Polynomials $f(x)$ and $g(x)$. \\
{\bf Output} Either ``No isomorphism exists", a verified isomorphism, or a $\mathbb{Z}$-module which contains $(g_n \vec{h}, g_n)$ for every isomorphism $h$.\\
{\bf Remark}: This lattice is given as the row space of a matrix $C$.\\
{\bf Procedure} \\

\begin{enumerate}
\item \emph{ Initialize}
\begin{enumerate}
\item $e:=n+1$.
\item $C :=$ $(n+1)$ {\rm x} $(n+1)$ identity matrix. 
\item $p:=3$.
\item $q:=0$.
\item Let $\{ {\rm Base}_i\} \in \mathbb{Q}(\alpha)_{<n}, i=1\dots n$ be $ \{ \rho(1, 0, \dots, 0), \rho(0, 1, \dots, 0), \dots, \rho(0, 0, \dots, 1) \}$ with $\rho$ defined in Section 2.
\end{enumerate}
 
\item Let $S$ be an upper bound for $\sum_{\stackrel{g(\beta)=0}{\beta \in \mathbb{C}}}{|\beta_i|}$, e.g. (\ref{bound}.1) or (\ref{bound}.2). Our implementation uses (\ref{bound}.1).
\item Let $b := n S \|f(x)\|$, be the bound described in Theorem~\ref{bound}.
\item Repeat (until the algorithm stops in 4(b), 4(e) or 4(i)).
\begin{enumerate} 
\item $q:= q+1$.
\item $p, p^a, m, M_q :=$ {\rm FindSuitablePrime}$(f, g, x, p, b, e)$.
\begin{enumerate}
\item If $p=0$ then return $C$.
\end{enumerate}
\item Find $Tr_p^d(f,{\rm Base}_i)$ for $i =1 \dots n$ and $Tr_p^d(g,\beta)$ for each $d$ with deg($F_d$)$ >0$. The necessary $F_d,G_d$ are read from $M_q$.

\item $A:=\left[ \begin{array}{cc} C & CT \\ 0 & P \end{array} \right] $, where 
$$P:= \left[ \begin{array}{ccc} p^a&& \\ &\ddots& \\ &&p^a \end{array} \right],$$ 
$$T:= \left[\begin{array}{ccc}
 Tr^{d_1}_{p_1}(f, {\rm Base}_1)&\dots&Tr^{d_m}_{p_1}(f, {\rm Base}_1) \\ 
 Tr^{d_1}_{p_1}(f, {\rm Base}_2)&\dots&Tr^{d_m}_{p_1}(f, {\rm Base}_2)\\
  \vdots&\dots&\vdots \\
  Tr^{d_1}_{p_1}(f, {\rm Base}_n)&\dots&Tr^{d_m}_{p_1}(f, {\rm Base}_n)\\
  Tr^{d_1}_{p_1}(g,\beta)&\dots& Tr^{d_m}_{p_1}(g,\beta) 
\end{array} \right]$$
 the $d_1,\dots,d_m$ are as in Step 2(h) in Algorithm FindSuitablePrime. (Omitted entries are zero.)
\item If $CT \equiv 0 \mod p^a$ then 
 \begin{enumerate}
	\item counter $:=$ counter $+1$.
	\item If counter $< 10$ then {\rm Go to Step 4(a)} else return $C$ and stop.
 \end{enumerate}

\item $L:=$ LLL-with-removals($A$, $b$). 

\item Let $C$ be the matrix with the first $n+1$ columns of $L$ and $B$ the remaining $m$ columns of $L$, so $L=[\begin{array}{cc} C&B \end{array}]$.

\item if $B \neq 0$ then

 \begin{enumerate}
	\item $ B:= 10^{20} \cdot B$ 
	\item $A := [\begin{array}{cc} C&B \end{array}]$
	\item $L := \text{LLL-with-removals}(A,b)$, then go to Step 4(g).
 \end{enumerate}
\item Let $e := $ number of rows of $C$.
	\begin{enumerate}
	\item if $e=0$ then output ``There is no isomorphism."
	\item if $e=1$ then let $C$ be $[V,v]$ with $V$ an $n$ dimensional vector, and let $h$ be the polynomial corresponding to $V/v$.
		\begin{enumerate}
		\item Let {\rm iso}:= $\frac{h(\alpha) g_n}{f'(\alpha)}$.
		\item If {\rm iso} satisfies $g$ then output ``{\rm iso} is the only isomorphism."
		\item If not then output ``There is no isomorphism."
 		\end{enumerate}
	\item Else, go to Step 4a.
	\end{enumerate}
\end{enumerate}
\end{enumerate}
\normalfont
\begin{rmk}  If we let $d$ be the number of isomorphisms ($\mathbb{Q}(\beta) \rightarrow \mathbb{Q}(\alpha)$) then just by looking at the input/output of the Pre-processing algorithm we see that:
$$ \text{If } d \in \{0,1\} \text{ then the output is either } \left\{ \begin{array}{c} \text{all isomorphisms} \\ \text{a lattice} \end{array} \right.   $$
$$ \text{If } d > 1 \text{ then the output is a lattice} $$
\end{rmk}

\par  In the next algorithm we use the lattice outputted from {\em Pre-Processing} to check all possible maps on the roots to see which are actual isomorphisms. This will find all isomorphisms from $\mathbb{Q}(\beta) \rightarrow \mathbb{Q}(\alpha)$.

\par The following algorithm is described for (linear) roots of $f$ and $g$ in $\mathbb{Q}_p$ and can be extended to the roots of $F_i$ and $G_i$ instead. 

\begin{rmk} \label{useful} It should be noted that even if the Pre-processing Algorithm does not find the isomorphism(s), the LLL switches it performs will still contribute to the FindIsomorphism Algorithm. This is true for the same reason as in [11, pg 175].
\end{rmk}

{\bf Algorithm: FindIsomorphism} \\ \itshape
{\bf Input} Two polynomials, $f,g \in \mathbb{Z}[x]$ which are irreducible and of the same degree. \\
{\bf Output} The set of all isomorphisms from $\mathbb{Q}[x]/(f)$ to $\mathbb{Q}[x]/(g)$. \\
{\bf Procedure} \\

\begin{enumerate}
\item $C$ := {\em Pre-Processing}($f(x)$ , $g(x)$, $x$).
\item If Step 2(i) in Algorithm {\em FindSuitablePrime} (called from {\em Pre-Processing}) stored at least one prime, then choose one with smallest $\text{deg}(F_1)$. Otherwise keep calling Algorithm {\em FindSuitablePrime} until such a prime is found.
\item Let $\alpha_1, \dots, \alpha_d$ be the roots of $F_1$ and Hensel lift them to $\mathbb{Z}/(p^a)$ with $a$ as in Algorithm {\em FindSuitablePrime}. Likewise let $\beta_1, \dots, \beta_d \in \mathbb{Z}/(p^a)$ be the roots of $G_1$.
\item For $j$ from 1 to $d$ do: 
  \begin{enumerate}
	\item Apply steps 4(d) through 4(i)ii of {\em Pre-Processing} using 
	$$T:= \left[\begin{array}{c}
 {\rm Base}_1|_{\alpha = \alpha_j} \\ 
 {\rm Base}_2|_{\alpha = \alpha_j}\\
  \vdots \\
 {\rm Base}_n|_{\alpha = \alpha_j}\\
 \beta_1 
\end{array} \right]$$ 
	\item If $e > 1$ then 
	\begin{enumerate}
			\item Hensel Lift the roots of $f$ and $g$ to twice the current $p$-adic precision, i.e. $p^{2a}$.
			\item Apply Step 4(a) with the more precise roots. 
	\end{enumerate}
\end{enumerate}

\end{enumerate}
\normalfont

\subsection{Proofs of Termination and Validity}
In this section we prove that the algorithms terminate and show that the algorithm does indeed produce all isomorphisms of the number fields $\mathbb{Q}(\beta)$ and $\mathbb{Q}(\alpha)$.

\par First we cite a lemma which shows why we can use LLL with removal in our algorithm. 

\begin{lemma} \label{LLL} Let $\{ b_1, \dots, b_k \}$ be a basis for a lattice, $C$, and $\{ b_1^*, \dots , b_k^* \}$ the corresponding Gram-Schmitt orthogonalized basis for $C$. If $ \| b_k^* \| > B$ then a vector in $C$ with norm less than $B$ will be a $\mathbb{Z}$-linear combination of $\{ b_1, \dots, b_{k-1} \}$.
\end{lemma}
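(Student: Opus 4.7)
The plan is to prove the contrapositive: any $v \in C$ with a nonzero $b_k$-coefficient in its basis expansion must satisfy $\|v\| \geq \|b_k^*\|$.

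First I would write any $v \in C$ as $v = \sum_{j=1}^k c_j b_j$ with $c_j \in \mathbb{Z}$, and expand each basis vector in the Gram--Schmidt basis as $b_j = b_j^* + \sum_{i<j} \mu_{j,i} b_i^*$ with $\mu_{j,j} = 1$. Substituting and swapping the order of summation gives
\[ v = \sum_{i=1}^k \Bigl( \sum_{j=i}^k c_j \mu_{j,i} \Bigr) b_i^*. \]
The key observation is that only $j=k$ contributes to the $b_k^*$ coefficient (since we need $j \geq k$), so the $b_k^*$ coefficient of $v$ is exactly $c_k \mu_{k,k} = c_k$.

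Next, because the $b_i^*$ are mutually orthogonal, Pythagoras gives
\[ \|v\|^2 = \sum_{i=1}^k \Bigl( \sum_{j=i}^k c_j \mu_{j,i} \Bigr)^2 \|b_i^*\|^2 \geq c_k^2 \, \|b_k^*\|^2. \]
Thus whenever $c_k \neq 0$, integrality forces $|c_k| \geq 1$ and therefore $\|v\| \geq \|b_k^*\| > B$. Contrapositively, any $v \in C$ with $\|v\| < B$ has $c_k = 0$, i.e.\ lies in the $\mathbb{Z}$-span of $b_1, \dots, b_{k-1}$, as claimed.

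There is no real obstacle here: the argument is a short orthogonality computation, and the only point to handle cleanly is the fact that the $b_k^*$-coefficient of $v$ reduces to the single integer $c_k$, which is what makes the integrality of the lattice coefficients bite.
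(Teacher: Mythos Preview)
Your argument is correct and is precisely the standard orthogonality computation underlying Proposition~1.11 of the LLL paper, which is what the paper cites in lieu of a proof. So your approach coincides with the paper's (deferred) proof; there is nothing to add.
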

\begin{proof} This follows from the proof of Proposition 1.11 in [9], it is also stated as Lemma 2 in [10].
\end{proof} 

\begin{cor} Using LLL-with-removals on a lattice containing $g_n\vec{h}$ with the bound $b$, computed in Step 3 of {\em Pre-Processing}, does not remove $g_n\vec{h}$ from the lattice.
\end{cor}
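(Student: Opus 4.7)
The plan is to reduce the corollary to Lemma~\ref{LLL} after checking two hypotheses: that the target vector actually lies in the lattice $C$ at every iteration, and that its Euclidean norm is at most $b$. Once both are in place, Lemma~\ref{LLL} gives the conclusion immediately: LLL-with-removals only discards the last basis vector $b_k$ when its Gram--Schmidt length exceeds $b$, and in that situation any lattice vector of norm at most $b$ (in particular $g_n\vec{h}$) is already a $\mathbb{Z}$-linear combination of $b_1,\ldots,b_{k-1}$, hence survives the removal.

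First I would verify membership. Corollary~\ref{coralgint} says $g_n\vec{h}\in\mathbb{Z}^n$, so the augmented target $(g_n\vec{h}, g_n)$ starts out in the initial lattice $\mathbb{Z}^{n+1}$. At each outer iteration of Pre-Processing the lattice is enlarged by the columns $CT$ and diagonal block $P$ from Step 4(d); but by Remark~\ref{trace}, for any isomorphism $\beta\mapsto h(\alpha)$ one has $Tr_p^d(g,\beta)=Tr_p^d(f,h(\alpha))$, and linearity of $Tr_p^d(f,\cdot)$ in the basis $\{\mathrm{Base}_i\}$ then forces $(g_n\vec{h}, g_n)\cdot T\equiv 0\pmod{p^a}$. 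Thus $(g_n\vec{h}, g_n, 0)$ lies in the row span of $A$, and the same argument re-applies each time $A$ is rebuilt.

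Second, I would invoke Theorem~\ref{bound} to get the norm bound: equation~(4) gives $\|g_n\vec{h}\|\leq g_n\cdot n\cdot S\cdot\|f(x)\|$, which is exactly the quantity used for $b$ in Step 3 (with the conventions of Remark~\ref{reason} built in). So $\|g_n\vec{h}\|\leq b$, and by the membership check above this bound persists through the passage to the augmented lattice. An induction on the number of removal steps performed inside a single LLL-with-removals call, and then on the number of outer iterations $q$ of Pre-Processing, completes the argument by repeatedly applying Lemma~\ref{LLL}.

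The main obstacle is purely bookkeeping: one must track the target vector through the two non-trivial modifications of the lattice in Step~4, namely the appending of the $CT\,|\,P$ block (Step 4(d)) and the rescaling of $B$ by $10^{20}$ (Step 4(h)), and check at each stage that both hypotheses of Lemma~\ref{LLL} still apply with the same $b$. The LLL content itself is already isolated in Lemma~\ref{LLL} and requires no further work.
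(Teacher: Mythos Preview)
Your proposal is correct and follows exactly the same route as the paper: combine Theorem~\ref{bound} (which gives $\|g_n\vec{h}\|\le b$) with Lemma~\ref{LLL} (which guarantees that removal of a last basis vector with Gram--Schmidt length $>b$ cannot discard any lattice vector of norm $\le b$). The paper's proof is a single sentence invoking precisely these two results; you have simply expanded the argument with the membership check via Corollary~\ref{coralgint} and Remark~\ref{trace} and with the bookkeeping across iterations, all of which the paper leaves implicit.
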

\begin{proof} Using Lemma~\ref{LLL} and Theorem~\ref{bound} we know that removing final vectors with Gram-Schmitt length bigger than $b$ does not remove any of the $g_n \vec{h}$.
\end{proof}

\begin{lemma} The {\em Pre-Processing} Algorithm terminates.
\end{lemma}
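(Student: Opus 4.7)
The plan is to show that the main loop in Step 4 can execute only finitely many times, so the algorithm must eventually exit through one of its termination paths: Steps 4(b)i, 4(e)ii, 4(i)i or 4(i)ii. The only way the loop continues is through Step 4(i)iii, which requires that $C$ currently has at least two rows. I will track two quantities associated with the sublattice $L \subseteq \mathbb{Z}^{n+1}$ spanned by the rows of $C$: its rank $r$ and its covolume $\det(L)$. Initially $r = n+1$ and $\det(L) = 1$.

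Consider an iteration that does not terminate. Either $CT \equiv 0 \pmod{p^a}$ and the counter in Step 4(e) is incremented; since termination occurs once this counter reaches $10$, this situation can recur at most ten times. Otherwise $CT \not\equiv 0 \pmod{p^a}$. The rescaling of $B$ by $10^{20}$ in Step 4(h), followed by LLL-with-removals with bound $b$, forces every surviving row $v$ to have its $B$-component equal to zero and therefore to satisfy $vT \equiv 0 \pmod{p^a}$. Thus the new lattice $L'$ is contained in $\{v \in L : vT \equiv 0 \pmod{p^a}\}$, which is a proper sublattice of $L$ since $CT \not\equiv 0 \pmod{p^a}$. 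It follows that either the rank strictly decreases, or the rank stays the same while $\det(L') \ge 2\det(L)$ (since any proper sublattice of the same rank has integer index at least two).

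To close the argument I would bound $\det(L)$ from above whenever the rank is fixed at some value $r$. Because LLL-with-removals halts only when the last Gram--Schmidt vector has length at most $b$, and because in an LLL-reduced basis consecutive Gram--Schmidt lengths satisfy $|b_{i+1}^*| \ge |b_i^*|/\sqrt{2}$, one gets $|b_i^*| \le b \cdot 2^{(r-i)/2}$ for all $i$, hence $\det(L) = \prod_{i=1}^r |b_i^*| \le b^{r} \cdot 2^{r(r-1)/4}$. Combining: for each fixed rank $r$, the number of non-trivial iterations is at most $\log_2\!\bigl(b^r \cdot 2^{r(r-1)/4}\bigr)$; the rank can drop at most $n+1$ times; and trivial iterations are bounded by the counter. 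Hence the loop terminates.

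The main obstacle is the second step: one has to verify carefully that Step 4(h)'s rescaling genuinely enforces $B=0$ on every output row within the bound $b$ (so that the inclusion $L' \subseteq \{v \in L : vT \equiv 0 \pmod{p^a}\}$ really does hold), and one must confirm, using the corollary preceding this lemma, that the vectors $g_n\vec{h}$ are not accidentally removed during this process. Once those two points are established, the rank/covolume bookkeeping above is routine.
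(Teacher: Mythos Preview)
Your argument is correct and in fact considerably more thorough than the paper's own proof. The paper's proof is a single short paragraph: it asserts that every individual step of the algorithm is a known terminating procedure, and that the only non-obvious point is the inner loop at Step~4(h). It then argues that 4(h) terminates because each pass multiplies $B$ by $10^{20}$, which increases the determinant and hence forces removals, so the number of rows is monotonically decreasing. The paper says nothing about why the \emph{outer} Repeat loop of Step~4 terminates; it treats that as ``immediate.''

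Your route is complementary. You give a complete argument for the outer loop via rank/covolume bookkeeping, and you flag the inner loop as the remaining obstacle. That obstacle --- that Step~4(h) eventually exits with $B=0$ --- is precisely the part the paper's proof handles (and by essentially the same determinant-growth idea you would use). Your second flagged point, that the vectors $g_n\vec h$ are not accidentally discarded, is relevant to \emph{correctness} of the output but not to termination: your rank/covolume bound goes through whether or not those vectors survive, so you can drop that concern here.

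In short, the paper's proof and yours together make a complete argument; yours alone is already closer to complete, since the inner-loop termination you leave open is the easier half, and the outer-loop termination you supply is the part the paper glosses over.
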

\begin{proof} The steps of the {\em Pre-Processing} algorithm are known algorithms that terminate, the only one that is not immediate is Step 4(h). Step 4(h) terminates because each run increases the determinant of the lattice (Step 4(h)i) and any final (see Lemma~\ref{LLL}) vector with Gram-Schmitt length bigger than $b$ is removed, thus the number of vectors is monotonically decreasing and hence it can only be run a finite number of times. 
\end{proof}

\begin{lemma} The {\em FindIsomorphism} Algorithm described above terminates. 
\end{lemma}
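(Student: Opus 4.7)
The plan is to handle each step of \emph{FindIsomorphism} in turn. Step 1 is the call to \emph{Pre-Processing}, which terminates by the previous lemma. For Step 2 my approach is to invoke the Chebotarev density theorem: the conjugacy class of the identity in the Galois group of the splitting field of $f$ has positive density, so infinitely many primes $p$ have at least one linear factor of $f$ modulo $p$, i.e.\ $\deg(F_1)>0$. Hence repeated calls to \emph{FindSuitablePrime} must eventually encounter such a prime. Step 3 then Hensel-lifts a finite set of $d$ roots and is visibly finite.

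The only substantive issue is the inner precision-doubling loop in Step 4(b), run for each fixed $j\in\{1,\dots,d\}$. My plan is a precision argument. The matrix $T$ supplied in Step 4(a) encodes the linear constraint that any surviving $\vec{h}$ satisfy $h(\alpha_j)\equiv \beta_1 \pmod{p^a}$. Because an isomorphism $\mathbb{Q}(\beta)\to\mathbb{Q}(\alpha)$ is completely determined by the image of $\beta$, and the specification $h(\alpha_j)=\beta_1$ in $\mathbb{Z}_p$ pins down that image, there is at most one polynomial $h\in\mathbb{Q}[x]_{<n}$ satisfying this equation exactly. Therefore, in the $p$-adic limit, the lattice of admissible short vectors has rank at most one.

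Combining this observation with Theorem~\ref{bound} (which bounds $g_n\|\vec{h}\|$) and Lemma~\ref{LLL} (which justifies LLL-with-removals), I would argue that once $p^a$ exceeds a threshold depending only on $b$ and on the finitely many short vectors already present in $C$ from earlier rounds, every non-admissible candidate is forced to have Gram--Schmidt length greater than $b$ and is removed. At that point $e\le 1$ and the loop exits. The main obstacle is making this quantitative separation precise: showing that, given the finite set of short vectors in $C$, doubling the precision sufficiently many times provably pushes every non-isomorphism candidate above the bound $b$, so that only the unique $h$ with $h(\alpha_j)=\beta_1$ (if any) survives. Since the outer loop runs over the finite index set $j=1,\dots,d$, termination of the whole algorithm follows.
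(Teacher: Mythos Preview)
Your overall shape is right, and your treatment of Steps~1--3 (including the Chebotarev argument for Step~2, which the paper simply brushes off as ``clear'') is fine. The gap is exactly where you yourself flag it: the quantitative separation in Step~4. You argue that there is at most one $h\in\mathbb{Q}[x]_{<n}$ with $h(\alpha_j)=\beta_1$ exactly, and hence ``in the $p$-adic limit'' the admissible lattice should have rank $\le 1$. But the lattice at finite precision $p^a$ is not the set of exact solutions; it contains all integer vectors satisfying the congruence modulo $p^a$, and you give no mechanism showing that two short such vectors cannot persist at arbitrarily high precision. Appealing to Theorem~\ref{bound} and Lemma~\ref{LLL} only tells you that genuine isomorphism vectors are never removed; it does not by itself force spurious short vectors to disappear.

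The paper closes this gap with a concrete construction. Suppose the reduced lattice still has two basis vectors $(h_1,e_1)$ and $(h_2,e_2)$. Set $H=h_1$ if $e_1=0$, and $H=e_1h_2-e_2h_1$ otherwise. Then $H\ne 0$ (the two basis vectors are not proportional), $H$ has bounded coefficients (from the LLL size bound, e.g.\ equation~1.7 in [9], together with the fact that surviving vectors have Gram--Schmidt length at most $b$), and by construction $H(\alpha_j)\equiv 0\pmod{p^a}$. Since $\alpha_j$ is a $p$-adic root of $f$, this forces $p^a\mid\mathrm{Res}_x(H,f)$. But $\mathrm{Res}_x(H,f)$ is a nonzero integer (because $\deg H<n$ and $f$ is irreducible) with an a~priori upper bound depending only on $b$ and $f$; once $p^a$ exceeds that bound we have a contradiction. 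This resultant step is precisely the ``threshold'' you were looking for and is what is missing from your sketch.
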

\begin{proof} 
\par For Steps 1-3 it is clear why each will terminate. We show that Step 4 terminates by contradiction. \\
Suppose Step 4 never terminates (i.e. the lattice always has dimension $>$ 1) then it contains at least two vectors: $(h_1, e_1)$ and $(h_2,e_2)$. Let $H=h_1$ if $e_1=0$ or $H=e_1 h_2- e_2 h_1$ otherwise. Then $H(\alpha) \equiv 0 \mod p^a$. We get a contradiction when $p^a$ is larger than an upper bound for ${\rm Res}_x(H,f)$. An upper bound for $H$ can be obtained from equation $1.7$ in [9] and the fact that the last vector after LLL-with-removals has Gram-Schmitt length $\leq b$.
\end{proof}

\section{Heuristic estimate on the rank of $C$}

Let $C \subseteq \mathbb{Z}^{n+1}$ be the output of the {\rm Pre-Processing} Algorithm. 

\begin{observation} In most (but not all) examples, dim($C$) is equal to $n- n/d + 1$.
\end{observation}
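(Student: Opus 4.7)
The plan is to compute $\dim_{\mathbb{Q}}(C\otimes \mathbb{Q})$ in the heuristic regime where enough primes and enough $p$-adic precision have been processed for the lattice to stabilise to its ``infinite-prime'' limit; in that limit $(v,c) \in C\otimes \mathbb{Q}$ is characterised by the exact equalities $Tr^{\ell}_p(f, \rho(v)) = (c/g_n)\, Tr^{\ell}_p(g, \beta)$ for every prime $p$ and every positive integer $\ell$ (I use $\ell$ for the cycle length to avoid clashing with the notation $d$ for the number of isomorphisms). First I would fix any one isomorphism $h_1$ (the observation tacitly assumes $d \geq 1$) and set $\tilde\beta := h_1(\beta) \in \mathbb{Q}(\alpha)$. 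By Remark~\ref{trace} the pair $(g_n\vec{h_1}, g_n)$ already satisfies every such equation, so the affine shift $a := \rho(v) - c\tilde\beta \in \mathbb{Q}(\alpha)$ reduces the task to computing $\dim_{\mathbb{Q}} K$, where $K \subseteq \mathbb{Q}(\alpha)$ is the common kernel of all sub-trace functionals $a \mapsto Tr^{\ell}_p(f, a)$; one then has $\dim(C\otimes \mathbb{Q}) = \dim K + 1$.

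Next I would translate the sub-trace conditions into group theory. Let $\tilde K$ be the Galois closure of $\mathbb{Q}(\alpha)$, $G := \operatorname{Gal}(\tilde K/\mathbb{Q})$, and $H := \operatorname{Stab}_G(\alpha)$, so that the embeddings of $\mathbb{Q}(\alpha)$ into $\tilde K$ are parametrised by $G/H$. By Chebotarev density the Frobenius $\phi_p$ realises every $\sigma \in G$ as $p$ varies, so the conditions cutting out $K$ become
$$F_{\sigma,\ell}(a) := \sum_{\tau H \in X_{\sigma,\ell}} \tau(a) = 0, \qquad \sigma\in G,\ \ell \geq 1,$$
where $X_{\sigma,\ell}\subseteq G/H$ is the union of the length-$\ell$ cycles of the left action of $\sigma$ on $G/H$. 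The structural observation driving the count is that $A := N_G(H)/H \cong \operatorname{Aut}(\mathbb{Q}(\alpha)/\mathbb{Q})$, of order $d$, acts on $G/H$ from the right by $gH\cdot n := gnH$; this right action is free (so it has exactly $n/d$ orbits) and commutes with every left $\sigma$-action, so each $X_{\sigma,\ell}$ is a union of right-$A$-orbits. Consequently every $F_{\sigma,\ell}$ is invariant under the natural $A$-action on $\operatorname{Hom}_{\mathbb{Q}}(\mathbb{Q}(\alpha),\tilde K)$, and the space of $A$-invariant $\mathbb{Q}$-linear functionals on $\mathbb{Q}(\alpha)$ has $\mathbb{Q}$-dimension $[\mathbb{Q}(\alpha)^A : \mathbb{Q}] = n/d$.

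The final step is the heuristic claim that the $F_{\sigma,\ell}$ in fact span that entire $n/d$-dimensional $A$-invariant subspace; granting this, $\dim K = n - n/d$ and $\dim(C\otimes \mathbb{Q}) = n - n/d + 1$ drops out. The main obstacle, and the exact source of the ``but not all'' caveat in the observation, is precisely this spanning step: there is no a priori reason why the indicator functions of the sets $X_{\sigma,\ell}$ should generate the full $A$-invariant part of $\mathbb{Q}[G/H]$, and one can construct group-theoretic configurations $(G,H)$ for which a non-trivial relation among these indicators strictly shrinks the span and so inflates $\dim C$. A secondary, finite-data caveat, which I would address separately, is that any run of the algorithm uses only finitely many primes, so $\dim C$ may overshoot the generic value $n-n/d+1$ until enough Frobenius classes have been realised.
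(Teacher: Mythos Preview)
Your argument is correct and follows essentially the same route as the paper: both identify the common kernel of the sub-trace functionals with the nullspace of a ``partition matrix'' whose rows are the indicator vectors of the sets $X_{\sigma,\ell}$, observe that the $d$ automorphisms (your free right $N_G(H)/H$-action on $G/H$; the paper's ``at most $n/d$ distinct columns'') force the rank to be $\leq n/d$, and then take equality as the generic heuristic with the explicit caveat that spanning can fail. Your write-up is more carefully framed---the affine shift by a fixed isomorphism to get $\dim C = \dim K + 1$, the explicit invocation of Chebotarev, and the separate finite-prime caveat---but the mathematical content matches the paper's heuristic.
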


This means that {\rm Pre-Processing} is most effective when $d=1$.  Though as pointed out in Remark~\ref{useful} the work done in {\rm Pre-Processing} reduces the amount left to do.

\par Let $G$ be the Galois group of  $f(x)$ and let $H_i$ be the stabilizer of $\alpha_i$ for $i\in \{1,2,\dots,n\}$, where the $\alpha_i$ are the roots of $f(x)$. 
\par Let $d$ be the number of $j$ such that $H_1 = H_j$, then $d$ is the number of automorphisms of $\mathbb{Q}(\alpha)$. If $\mathbb{Q}(\alpha)$ and $\mathbb{Q}(\beta)$ are isomorphic then $d$ will also be the number of isomorphisms from $\mathbb{Q}(\beta)$ to $\mathbb{Q}(\alpha)$.

\begin{rmk} We view $G$, which as the Galois group acts on $\{\alpha_1,\alpha_2,\dots,\alpha_n\}$, as acting on the set $\{1,2,\dots,n\}$ in the most natural way. Hence we view $G$ as a subgroup of $S_n$, the symmetric group.
\end{rmk}

We will construct a partition matrix as follows. For each $\sigma \in G$, group together the cycles of the same length. Different group elements and cycle lengths will correspond to different rows. For each element of $G$ and for each cycle length in $\sigma$, construct one row of $P$ as follows: place a $1$ in the $i^{th}$ entry if $\alpha_i$ is in a cycle of that length. We call the resulting matrix $P$.

\par For example for $\sigma_1 = (1)(2)(3)(456)$ and $\sigma_2 = (12)(3456)$ we would get the following partition matrix :
$$ P=\begin{array}{c} \sigma_1 \text{   } l=1\\ \sigma_1 \text{   } l=3\\ \sigma_2 \text{   } l=2\\ \sigma_2 \text{   } l=4\\ \vdots \end{array}\begin{bmatrix} 1&1&1&0&0&0 \\ 0&0&0&1&1&1 \\ 1&1&0&0&0&0 \\ 0&0&1&1&1&1 \\ \vdots&\vdots&\vdots&\vdots&\vdots&\vdots \end{bmatrix}$$

\par Since there are $d$ automorphisms the number of distinct columns of $P$ will be $\leq n/d$, hence $\text{rank}(P) \leq n/d$ and thus $\text{Nullspace}(P) \geq n-n/d.$

\par This translates into an estimate on the rank of the lattice $C$ since it helps us bound $$V = \bigcap_{p,d} {Ker(Tr_p^d(f,*))}.$$ 
\par Nullspace($P$) corresponds to elements for which all sub-traces are zero, so dim(Nullspace($P$)) $\leq$ dim($V$). 
\par Since we used LLL-with-removals with cut off point $b$, if $V$ admits a basis whose norms are all smaller than $b$ then $V \subseteq \pi_{1\dots n}(C)$, where $\pi_{1\dots n}$ is the projection on the first $n$ coordinates. 

Therefore under that assumption $$\text{dim}(\pi_{1\dots n}(C) ) \geq \text{dim(Nullspace($P$))} \geq n- n/d.$$ 

This leads to our estimate: \begin{equation} \label{rank} \text{dim}(C) \approx n-n/d +1. \end{equation}

\par For most polynomials taken from the database [5] our estimate is an equality. Peter Muller provided an infinite sequence of counter-examples for the case we were most interested in ($d=1$). For the first group in this sequence, the database [5] provides the following example: \\
$f := x^{14}+2x^{13}-5x^{12}-184x^{11}-314x^{10}+474x^9+1760x^8+1504x^7-400x^6-1478x^5-818x^4+73x^3+260x^2+121x+23,$ \\
which has one automorphism but the Pre-processing algorithm outputs a dimension $2$ lattice.

\section{Computational Efficiency}
We compare our algorithm implemented in Maple with other methods of finding isomorphisms. The best algorithm we know for factoring over number fields is give by Belabas in [6], which is implemented in Pari/Gp. Both implementations were run on a standard 2.2GHz processor. We tested them on the field extensions given by the following two degree 25 polynomials: \\
$f_1 := 2174026154062500000\,{x}^{25}-12927273797812500000\,{x}^{24}+
44254465332187500000\,{x}^{23}-102418940816662500000\,{x}^{22}+
180537842164766250000\,{x}^{21}-249634002590534050000\,{x}^{20}+
292282923494920350000\,{x}^{19}-384197583430502150000\,{x}^{18}+
815826517614521346000\,{x}^{17}-2131245874043847615600\,{x}^{16}+
4352260622811059705104\,{x}^{15}-6463590834754261173232\,{x}^{14}+
6920777688226436002712\,{x}^{13}-4525061881234027826296\,{x}^{12}+
528408698276686662696\,{x}^{11}+2762117617850418790424\,{x}^{10}-
4343360968383689825174\,{x}^{9}+4191186502263628451150\,{x}^{8}-
2802452375464033976482\,{x}^{7}+1332292171242725153638\,{x}^{6}-
161285249796825311495\,{x}^{5}-429207332210687640181\,{x}^{4}+
264147194777000152867\,{x}^{3}+6032198632961699729\,{x}^{2}-
42885793067858008650\,x+13774402803823804220$ and \\ 
$f_2 := -42885793067858008650\,x-13774402803823804220-161285249796825311495\,{
x}^{5}+429207332210687640181\,{x}^{4}+264147194777000152867\,{x}^{3}-
6032198632961699729\,{x}^{2}-1332292171242725153638\,{x}^{6}-
2802452375464033976482\,{x}^{7}-4191186502263628451150\,{x}^{8}-
4343360968383689825174\,{x}^{9}-2762117617850418790424\,{x}^{10}+
528408698276686662696\,{x}^{11}+4525061881234027826296\,{x}^{12}+
6920777688226436002712\,{x}^{13}+6463590834754261173232\,{x}^{14}+
4352260622811059705104\,{x}^{15}+815826517614521346000\,{x}^{17}+
384197583430502150000\,{x}^{18}+292282923494920350000\,{x}^{19}+
249634002590534050000\,{x}^{20}+180537842164766250000\,{x}^{21}+
102418940816662500000\,{x}^{22}+44254465332187500000\,{x}^{23}+
2131245874043847615600\,{x}^{16}+12927273797812500000\,{x}^{24}+
2174026154062500000\,{x}^{25}.$ 

\par These are field extensions with one isomorphism between them. Using Belabas' method we have a runtime of 11.69 seconds, which includes the indispensable operation of defining the number field, and with our algorithm we have a runtime of 2.97 seconds.

\par We also tested this algorithm in the case where our heuristic estimate on the rank does not apply, namely on the first counter-example. Using Belabas' method we have a runtime of .091 seconds and with our algorithm we have a runtime of .797 seconds. 

\par Bearing in mind that our implementation is not optimized and is coded in Maple we expect it to be much faster when using a better implementation of LLL. 

\par We also tested them on a larger example, namely the degree 81 example located at [7], our algorithm found the isomorphism in 2326.581 seconds and the Belabas' implementation did not finish as it ran out of memory after trying for a few days. Therefore there are certainly advantages to using this algorithm, as there are examples where there is a significant reduction in the required runtime/resources.

\section{Summary}
Method II (from Section 1) can be described by the following procedure: first pick $p$ such that $f$ and $g$ have roots in $\mathbb{Q}_p$. Fix one root $\beta \in \mathbb{Q}_p$ of $g$, take all roots $\alpha_1, \dots, \alpha_d \in \mathbb{Q}_p$ of $f$. Then for each $\alpha_i$ use LLL to find $h_i \in \mathbb{Q}[x]$ (if it exists) with $h_i(\alpha_i) = \beta_i$.

\par Our approach is similar, the difference is that we start with LLL reductions (obtained from sub-traces) that are valid for all $\alpha_i$. This way, a portion of the LLL computation to be done for each $\alpha_i$ is now shared. The time saved is then $(d-1)$ times the cost of the shared portion. This can be made rigorous by introducing a progress counter for LLL cost similar to [10].
\section{References}

\begin{enumerate}

\item Granville, A. ``Bounding the coefficients of a divisor of a given polynomial", Monatsh. Math. 109 (1990), 271-277. 
\item Conrad, Kieth. ``The different ideal". Expository papers/Lecture notes. Available at: \\ http://www.math.uconn.edu/$\sim$kconrad/blurbs/gradnumthy/different.pdf
\item Monagan, M. B. ``A Heuristic Irreducibility Test for Univariate Polynomials", J. of Symbolic Comp., 13, No. 1, Academic Press (1992) 47-57.
\item Dahan, X. and Schost, $\acute{E}$. 2004. ``Sharp estimates for triangular sets". In Proceedings of the 2004 international Symposium on Symbolic and Algebraic Computation (Santander, Spain, July 04 - 07, 2004). ISSAC '04. ACM, New York, NY, 103-110. 
\item Database by J$\ddot{u}$rgen Kl$\ddot{u}$ners and Gunter Malle , located at: \\ http://www.math.uni-duesseldorf.de/$\sim$klueners/minimum/minimum.html
\item Belabas, Karim. ``A relative van Hoeij algorithm over number fields". J. Symbolic Computation, Vol. 37 (2004), no. 5, pp. 641-668.
\item Website with implementations and Degree 81 examples:\\ http://www.math.fsu.edu/\  $\sim$vpal/Iso/

\item van Hoeij, Mark. ``Factoring Polynomials and the Knapsack Problem." J. Number Th. 95, 167-189, 2002
\item Lenstra, A. K.; Lenstra, H. W., Jr.; Lov\'{a}sz, L.  ``Factoring polynomials with rational coefficients". Mathematische Annalen 261 (4), 515-534, 1982.
\item M. van Hoeij and A. Novocin, `` Gradual sub-lattice reduction and a new complexity for factoring polynomials", accepted for proceedings of LATIN 2010.
\item Cohen, Henri A Course in Computational Algebraic Number Theory, Graduate Texts in Mathematics 138, Springer-Verlag, 1993.
\end{enumerate}

Florida State University 211 Love Building, Tallahassee, Fl 32306-3027, USA\\
{\em E-mail address}: hoeij@math.fsu.edu\\
{\em E-mail address}: vpal@math.fsu.edu\\

\end{document}